\documentclass[11pt]{article}

\usepackage{amssymb,amsmath,amsfonts}
\usepackage{graphicx,color,enumitem}
\usepackage{amsthm} 
\usepackage{bm}
\usepackage[round]{natbib}

\usepackage{geometry}

\usepackage[colorinlistoftodos, textwidth=4cm, shadow]{todonotes}
\RequirePackage[colorlinks,citecolor=blue,urlcolor=blue]{hyperref}

\usepackage{tikz}
\tikzstyle{vertex}=[circle, draw, inner sep=2pt, fill=white]


\newcommand{\E}{{\mathbb E}}

\newcommand{\C}{{\mathbb C}}
\newcommand{\R}{{\mathbb R}}

\newcommand{\Fcal}{{\mathcal F}}

\newcommand{\Rcal}{{\mathcal R}}

\newcommand{\Mid}{{\ \Big|\ }}

\newcommand{\QVar}[1]{\left\langle{#1}\right\rangle}
\newcommand{\QCov}[2]{\left\langle{#1},{#2}\right\rangle}

\newcommand{\1}{\mathbf{1}}

\newcommand{\fdot}{{\,\cdot\,}}

\newtheorem{theorem}{Theorem}


\newtheorem{example}[theorem]{Example}

\newtheorem{proposition}[theorem]{Proposition}
\newtheorem{remark}[theorem]{Remark}

\numberwithin{equation}{section}
\numberwithin{theorem}{section}

\begin{document}

\title{Affine Rough Models\footnote{The authors would like to thank Eduardo Abi Jaber and Christa Cuchiero for valuable discussions and suggestions. Martin Keller-Ressel gratefully acknowledges financial support from DFG grants ZUK~64 and KE~1736/1-1. Martin Larsson gratefully acknowledges financial support from SNF Grant 205121\_163425. Sergio Pulido gratefully acknowledges financial support from MATH AmSud project SaSMoTiDep 18-MATH-17.}}
\author{Martin Keller-Ressel\thanks{Institute of Mathematical Stochastics, TU Dresden, 01062 Dresden,
Germany, martin.keller-ressel@tu-dresden.de.} \and Martin Larsson\thanks{Department of Mathematics, ETH Zurich, R\"amistrasse 101, CH-8092, Zurich, Switzerland, martin.larsson@math.ethz.ch.} \and Sergio Pulido\thanks{Laboratoire de Math\'ematiques et Mod\'elisation d'\'Evry (LaMME), Universit\'e d'\'Evry-Val-d'Essonne, ENSIIE, Universit\'e Paris-Saclay, UMR CNRS 8071, IBGBI 23 Boulevard de France, 91037 \'Evry Cedex, France, sergio.pulidonino@ensiie.fr.}}
\date{December 20, 2018}

\maketitle
\begin{abstract}
The goal of this survey article is to explain and elucidate the affine structure of recent models appearing in the rough volatility literature, and show how it leads to exponential-affine transform formulas.
\end{abstract}

\section{Introduction}

Affine stochastic volatility models have a long history in the quantitative finance literature; see e.g.\ \cite{duf_fil_sch_03,Kallsen:2006} and the references listed there. These models are generally of the form
\begin{equation}\label{eq:S_SDE}
dS_t = S_t \sqrt{V_t} dB_t,
\end{equation}
where $S$ is the asset price, and the (spot) variance $V$ is modeled by an affine process. Arguably, the most prominent example is the \cite{heston1993closed} model, where $V$ follows a scalar square-root diffusion. The affine property leads to tractable Fourier--Laplace transforms of various quantities of interest. For example, the log-price satisfies the exponential-affine transform formula
\begin{equation}\label{eq:affine}
\E[\exp(v \log S_T) \mid \Fcal_t] = \exp\left(v \log S_t + \phi(T-t) + \psi(T-t) V_t\right),
\end{equation}
where $\phi, \psi$ are the solutions to ordinary differential equations of Riccati type that depend on $v$. Similar formulas also exist for the spot variance and integrated spot variance.

Unfortunately, these models do not produce the rough trajectories of volatility that seem to occur empirically, see \cite{gatheral2018volatility}, and have trouble capturing the term structure of implied volatilities and its skew, cf. \cite{fukasawa2017short}. Still, it is possible to construct stochastic volatility models with these features, and with an ``affine structure'' that produces formulas similar to \eqref{eq:affine}. This has recently been done by \cite{GJR:14,euch2016characteristic,jaber2017affine,gatheral2018affine}, and related ideas appear already in \cite{Comte2012}. The goal of this chapter is to explain and elucidate this ``affine structure'', and show how it leads to exponential-affine transform formulas.

We will give four perspectives. The point of departure, in Section~\ref{sec:SCE}, is a class of {\em stochastic convolution equations} for $V$ with affine coefficients, which contains the rough Heston model of \cite{euch2016characteristic} as an immediate special case. This is the first perspective. The second perspective, in Section~\ref{sec_fwdvar}, is to view these models as {\em forward variance models}, which focus on the forward variance curve
\begin{equation}\label{eq:fw_var}
\xi_t(T) = \E[V_T \mid \Fcal_t].
\end{equation}
The third perspective, in Section~\ref{sec:forwardrep}, is to regard a modified forward variance curve as the solution of a {\em stochastic partial differential equation}. Finally, the fourth perspective, in Section~\ref{sec_Laplace}, is available when the convolution kernel is the Laplace transform of a possibly signed measure. This leads to a representation as a {\em mixture of mean-reverting processes}. Dually, this gives multiple perspectives on the Riccati equations that characterize the associated Fourier--Laplace functionals.

Space constraints prevent us from including rigorous proofs of all results. Still, some proofs and derivations are presented, selected because they are instructive without being too long. We occasionally use the convolution notation $(f*g)(t)=\int_0^t f(t-s)g(s)ds$ for functions $f$ and $g$, and similarly $(f*dZ)_t=\int_0^t f(t-s)dZ_s$ when $Z$ is a semimartingale.

\section{Stochastic convolution equations}
\label{sec:SCE}

Consider the stochastic convolution equation
\begin{equation} \label{eq_SCE}
V_t = V_0 + \int_0^t K(t-s) b(V_s)ds + \int_0^t K(t-s) \sigma(V_s) dW_s,
\end{equation}
for some real continuous functions $b$ and $\sigma$, kernel $K\in L^2_{\rm loc}(\R_+)$, initial condition $V_0\in\R$, and Brownian motion $W$. Solutions to \eqref{eq_SCE} are always understood to have continuous paths.

\begin{example}\label{ex_1}
Taking $b(x)=\lambda(\theta-x)$, $\sigma(x)=\zeta\sqrt{x}$, and the power-law kernel $K_\text{$\alpha$-pow}(t) = t^{\alpha-1} / \Gamma(\alpha)$ with $\alpha\in(\frac12,1)$, we obtain the spot variance process in the rough Heston model of \cite{euch2016characteristic}. With $\alpha=1$, we recover the spot variance process in the classical Heston model.
\end{example}

Our focus is on the case where $V$ is an {\em affine Volterra process}, which is when $b(x)$ and $\sigma(x)^2$ are affine in $x$. This definition naturally generalizes to higher dimension: $b(x)$ is then a vector and $\sigma(x)$ a matrix, and one requires $b(x)$ and $\sigma(x)\sigma(x)^\top$ to be affine in $x$. In this chapter we focus on the one-dimensional affine case, so that
\begin{equation}\label{eq_b(x)a(x)}
\text{$b(x)=\beta-\lambda x$ and $\sigma(x)^2=\alpha+ax$}
\end{equation}
for some real parameters $\beta,\lambda,\alpha,a$ such that $\alpha+aV_t\ge0$ for all $t\ge0$. The latter condition raises delicate questions of existence of solutions to \eqref{eq_SCE} when $a\ne0$, which we do not address here in detail. Let us however state the following result, whose proof can be found in \cite{jaber2017affine}. Part~\ref{T_existence:2} of the theorem requires the following assumption on the kernel:
\begin{equation}\label{eq:K_cond}
\begin{minipage}[c][2.5em][c]{.85\textwidth}
$K$ is strictly positive and completely monotone. There is $\gamma \in (0,2]$ such that $\int_0^h K(t)^2dt = O(h^\gamma)$ and $\int_0^T (K(t+h) - K(t))^2 dt = O(h^\gamma)$ for every $T < \infty$.
\end{minipage}
\end{equation}
(Recall that a $C^\infty$ function $f\colon(0,\infty)\to\R$ is completely monotone if $(-1)^k f^{(k)}\ge0$, $k\ge0$.) In particular, the power-law kernel in Example~\ref{ex_1} satisfies \eqref{eq:K_cond} with $\gamma=2\alpha-1$.

\begin{theorem}\label{T_existence}
Consider the equation \eqref{eq_SCE} with coefficients $b(x)$ and $\sigma(x)$ as in \eqref{eq_b(x)a(x)} and kernel $K\in L^2_{\rm loc}(\R_+)$.
\begin{enumerate}
\item\label{T_existence:1} Assume that $\alpha\ge0$ and $a=0$. Then there exists a pathwise unique strong solution $V$ for any initial condition $V_0\in\R$; the \textbf{Volterra Ornstein--Uhlenbeck process}.
\item\label{T_existence:2} Assume that $\alpha=0$, $a>0$, $\beta\ge0$, and $K$ satisfies \eqref{eq:K_cond}. Then there exists a unique in law $\R_+$-valued weak solution $V$ for any initial condition $V_0\in\R_+$; the \textbf{Volterra square-root process}.
\end{enumerate}
In either case, the trajectories of $V$ are H\"older continuous of any order less than $\gamma/2$.
\end{theorem}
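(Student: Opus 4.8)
The plan is to treat the three assertions separately, exploiting the fact that in case~\ref{T_existence:1} the equation is \emph{linear} in $V$, whereas case~\ref{T_existence:2} is genuinely nonlinear and additionally requires preserving nonnegativity.

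\textbf{Part~\ref{T_existence:1}.} With $a=0$ the diffusion coefficient $\sigma\equiv\sqrt{\alpha}$ is constant, so \eqref{eq_SCE} collapses to the linear convolution equation
\[
V_t = g_t - \lambda\,(K * V)_t, \qquad g_t := V_0 + \beta\,(K*\1)_t + \sqrt{\alpha}\,(K*dW)_t,
\]
where $g$ is a fixed continuous Gaussian process, well defined since $K\in L^2_{\rm loc}$. The first step is to invoke the theory of Volterra resolvents: because $K\in L^1_{\rm loc}$, there is a unique resolvent $R_\lambda\in L^1_{\rm loc}$ of $\lambda K$, characterised by $R_\lambda + \lambda\,(K * R_\lambda)=\lambda K$, and the unique solution is then given by the variation-of-constants formula $V = g - (R_\lambda * g)$. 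Since this expresses $V$ pathwise as a deterministic functional of $W$, strong existence and pathwise uniqueness are immediate, and continuity of $V$ follows from continuity of $g$ together with the smoothing effect of convolution against an $L^1$ kernel.

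\textbf{Part~\ref{T_existence:2}.} Now $\sigma(x)=\sqrt{a}\,\sqrt{x}$ is only $\tfrac12$-H\"older, so I would proceed by approximation. First replace $\sqrt{x}$ by globally Lipschitz coefficients $\sigma_n(x)=\sqrt{a\,(x^+ + 1/n)}$; for each $n$ a Picard iteration in $L^2$ (using the resolvent bound and Burkholder--Davis--Gundy) yields a unique strong solution $V^n$. The second step is to derive uniform moment bounds $\sup_n\sup_{t\le T}\E[|V^n_t|^p]<\infty$ and, via the increment estimates of the final part below, to deduce tightness of $(V^n)$ in $C([0,T],\R)$; a weak limit point $V$ is then shown to solve \eqref{eq_SCE} with the square-root coefficient. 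The delicate points are (a)~nonnegativity of the limit, for which the complete monotonicity of $K$ in \eqref{eq:K_cond} is decisive, as it is precisely this sign structure that prevents the kernel from driving $V$ below zero and which links the model to the mixture-of-mean-reverting representation of Section~\ref{sec_Laplace}; and (b)~uniqueness in law. As there is no Markov structure, I would obtain (b) through the affine machinery itself: a martingale argument shows that for any weak solution the conditional Fourier--Laplace functional is given by an explicit expression driven by the solution of a deterministic Riccati--Volterra equation. Since these equations are uniquely solvable and characteristic functionals determine all finite-dimensional distributions, uniqueness in law follows.

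\textbf{H\"older regularity.} In both cases the path regularity comes from the Kolmogorov continuity criterion applied to the moment estimate $\E[|V_t-V_s|^p]\le C_{p,T}\,|t-s|^{p\gamma/2}$, valid for all large $p$. To prove this I would split each increment into a drift and a stochastic part and, for the latter, apply Burkholder--Davis--Gundy, reducing matters to the two $L^2$-estimates postulated in \eqref{eq:K_cond}: the contribution supported on $[s,t]$ is bounded using $\int_0^{t-s}K(r)^2\,dr=O(|t-s|^\gamma)$, while the contribution on $[0,s]$, built from the difference $K(t-\fdot)-K(s-\fdot)$, is bounded using $\int_0^s (K(r+|t-s|)-K(r))^2\,dr=O(|t-s|^\gamma)$. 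Combined with the uniform moment bounds on $V$, this yields the exponent $p\gamma/2$, and Kolmogorov's theorem gives H\"older continuity of every order strictly below $\gamma/2-1/p$; letting $p\to\infty$ gives the claim.

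I expect the main obstacle to be the uniqueness in law in Part~\ref{T_existence:2}. The absence of a Markov or flow structure for Volterra equations means that the usual Yamada--Watanabe or comparison arguments are unavailable, so one is forced to substitute the global, transform-based characterization via Riccati--Volterra equations, whose well-posedness in turn relies on the sign structure furnished by the complete monotonicity of $K$.
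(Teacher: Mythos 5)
First, a structural remark: the paper does not prove this theorem at all --- it explicitly defers to \cite{jaber2017affine} ("whose proof can be found in...") --- so your proposal can only be compared against that reference. Measured against it, your outline follows essentially the same route: part~\ref{T_existence:1} via the resolvent/variation-of-constants formula for the linear equation; part~\ref{T_existence:2} via Lipschitz regularization of the square-root coefficient, uniform moment and increment bounds giving tightness, identification of a weak limit, positivity of the limit from the structure of the kernel, and uniqueness in law through the exponential-affine transform machinery (precisely because pathwise uniqueness and Yamada--Watanabe-type arguments are unavailable, as the paper's own remark indicates); and H\"older regularity via Burkholder--Davis--Gundy combined with the two estimates in \eqref{eq:K_cond} and Kolmogorov's criterion. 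Your diagnosis that uniqueness in law is the hard point is also accurate.

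Three places where your sketch is materially thinner than the actual argument. First, in part~\ref{T_existence:1} you assert that $g$, hence $V$, is continuous ``since $K \in L^2_{\rm loc}$''; this is not justified for a general $L^2_{\rm loc}$ kernel, because continuity of the stochastic convolution $K*dW$ already requires the increment estimate $\int_0^T (K(t+h)-K(t))^2\,dt = O(h^\gamma)$, i.e.\ the quantitative part of \eqref{eq:K_cond}, which must therefore be understood as in force in part~\ref{T_existence:1} as well (this is why $\gamma$ appears in the conclusion for both cases). Second, ``complete monotonicity prevents the kernel from driving $V$ below zero'' is the right intuition, but the invariance proof is not soft: in \cite{jaber2017affine} it runs through the resolvent of the first kind of $K$ (whose existence, nonnegativity and monotonicity follow from complete monotonicity) and a discretization argument, and nothing in your sketch indicates how the sign structure is actually deployed. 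Third, for uniqueness in law, the single-time transform $\E[\exp(v V_T)]$ does not determine the law of the process; one needs joint or conditional functionals --- e.g.\ the $\Fcal_t$-conditional formula of Theorem~\ref{thm:Riccati_resolvent} involving $\int_0^T V_s\,ds$, iterated over time grids, or transforms with time-dependent integrands --- and one must establish the transform formula (in particular true martingality of $M$, via moment bounds) for an \emph{arbitrary} weak solution, not merely for a conveniently constructed one. These gaps are all fillable, and are filled in the cited reference, but they constitute the substance of the proof rather than routine verifications.
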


\begin{remark}
A solution $V$ is called strong if it is adapted to the filtration generated by the Brownian motion $W$. This is not required for a weak solution, where one is free to construct the Brownian motion as needed. Pathwise uniqueness means that any two solutions $V$ and $V'$ to \eqref{eq_SCE} driven by the same Brownian motion must have identical trajectories (outside a nullset). It is not known whether pathwise uniqueness holds in \ref{T_existence:2}.
\end{remark}

A Volterra--Heston model is a stochastic volatility model of the form \eqref{eq:S_SDE}, where the spot variance $V$ is a Volterra square-root process. Most of this chapter is concerned with such models. The process $V$ is generally neither a Markov process nor a semimartingale. This causes difficulties that the alternative perspectives developed in the following sections help to circumvent.

\section{Forward variance models}\label{sec_fwdvar}

A useful perspective on \eqref{eq_SCE} is as a {\em forward variance model}, as noted e.g.\ by \cite{buehler2006volatility, ber_guy_12} and \cite{bay_fri_gat_16}. Consider a model \eqref{eq:S_SDE} where the spot variance process $V$ is given by the stochastic convolution equation \eqref{eq_SCE} with affine drift $b(x) = \lambda(\theta - x)$, i.e. 
\begin{equation}\label{eq:V}
V_t = V_0 + \lambda \int_0^t K(t-s)(\theta-V_s)ds + \int_0^t K(t-s) \sigma(V_s)dW_s.
\end{equation}
Our first goal is to derive an SDE for the forward variance $\xi_t(T)$ defined in \eqref{eq:fw_var}. To this end, we remark that for any kernel $k\in L^2_{\rm loc}(\R_+)$ there exists a unique kernel $r\in L^2_{\rm loc}(\R_+)$, called the {\em resolvent} or {\em resolvent of the second kind} of $k$, such that
\[
k(t) - r(t) = \int_0^t r(t-s)k(s)ds, \quad t\ge0.
\]

\begin{example}
If $k(t)\equiv c$ is constant, then $r(t)=ce^{-ct}$. If $k(t)=c\,t^{\alpha-1}/\Gamma(\alpha)$ is proportional to the power-law kernel, then $r(t)=ct^{\alpha-1}  E_{\alpha, \alpha} (-ct^{\alpha})$ where $E_{\alpha, \alpha}$ denotes the Mittag-Leffler function.
\end{example}

The forward variance dynamics can now be described as follows.

\begin{proposition}\label{P_fwdvar}
Let $R_\lambda$ be the resolvent of $\lambda K$. The forward variance $\xi_t(T)$ associated to \eqref{eq:V} satisfies
\[
d \xi_t(T) = \tfrac{1}{\lambda} R_\lambda(T-t) \sigma(V_t) dW_t
\]
with initial condition 
\[
\xi_0(T) = V_0 \left(1 - \int_0^T R_{\lambda}(s) ds \right) + \theta \int_0^T R_\lambda(s) ds. 
\]
If $\lambda=0$, interpret $\lambda^{-1}R_\lambda=K$, and note that $R_\lambda=0$ in this case.
\end{proposition}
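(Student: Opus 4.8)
The plan is to remove the unknown $V$ from the drift by reading \eqref{eq:V} as a \emph{linear} Volterra equation in $V$ whose forcing is the martingale $M_t = \int_0^t \sigma(V_s)\,dW_s$, and then to solve it explicitly with the resolvent $R_\lambda$ of $\lambda K$. Writing the drift as $\lambda K *(\theta - V) = \lambda\theta\,(K*\1) - \lambda (K * V)$, equation \eqref{eq:V} becomes
\[
V + \lambda (K * V) = V_0 + \lambda\theta\,(K*\1) + K * dM.
\]
Since the resolvent obeys $\lambda K - R_\lambda = R_\lambda *(\lambda K)$, the unique solution of a linear Volterra equation $x + \lambda(K*x) = f$ is $x = f - R_\lambda * f$. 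Applying this to the forcing above, and using the resolvent identity in the form $R_\lambda * K = K - \tfrac1\lambda R_\lambda$, the two $K*dM$ terms cancel and one is left with the \emph{variation-of-constants representation}
\[
V_t = g(t) + \frac1\lambda \int_0^t R_\lambda(t-s)\,\sigma(V_s)\,dW_s,
\]
where $g = f_{\mathrm{det}} - R_\lambda * f_{\mathrm{det}}$ collects the deterministic part, with $f_{\mathrm{det}}(t) = V_0 + \lambda\theta\int_0^t K(s)\,ds$.

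I would then evaluate $g$ explicitly. Using once more $R_\lambda * K = K - \tfrac1\lambda R_\lambda$ to simplify $R_\lambda *(K*\1)$, the $\lambda\theta (K*\1)$ contributions cancel and one finds $g(t) = V_0\big(1 - \int_0^t R_\lambda(s)\,ds\big) + \theta\int_0^t R_\lambda(s)\,ds$, which is exactly the asserted value of $\xi_0(T)$ at $t=T$. It then remains to condition the representation at time $T$ on $\Fcal_t$: the deterministic term $g(T)$ is unaffected, and splitting the stochastic integral at $t$, the part over $[0,t]$ is $\Fcal_t$-measurable while the increment over $(t,T]$ is a martingale increment with vanishing conditional mean, so that
\[
\xi_t(T) = g(T) + \frac1\lambda\int_0^t R_\lambda(T-s)\,\sigma(V_s)\,dW_s.
\]
Differentiating in $t$ for fixed $T$ gives $d\xi_t(T) = \tfrac1\lambda R_\lambda(T-t)\sigma(V_t)\,dW_t$, and at $t=0$ one recovers $\xi_0(T) = g(T)$. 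The degenerate case $\lambda = 0$ is read off directly from \eqref{eq:V}: there is no drift, $\xi_t(T) = V_0 + \int_0^t K(T-s)\sigma(V_s)\,dW_s$, matching the conventions $\lambda^{-1}R_\lambda = K$ and $R_\lambda = 0$.

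The main obstacle is justifying the stochastic manipulations rather than the algebra. First, rearranging the iterated convolution $R_\lambda *(K*dM) = (R_\lambda * K)* dM$ requires a stochastic Fubini theorem, whose hypotheses hold because $K, R_\lambda \in L^2_{\mathrm{loc}}(\R_+)$ and $\sigma(V)$ has locally square-integrable moments. Second, concluding that the conditional expectation of the increment over $(t,T]$ vanishes needs the stochastic integral $\int_0^\cdot R_\lambda(T-s)\sigma(V_s)\,dW_s$ to be a genuine martingale, not merely a local one; this follows from the It\^o isometry together with $R_\lambda(T-\cdot)\in L^2([0,T])$ and the bound $\E\int_0^T \sigma(V_s)^2\,ds<\infty$, which is available in both regimes of Theorem~\ref{T_existence} since $V$ has a finite first moment and $\sigma(x)^2$ is affine.
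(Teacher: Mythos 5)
Your proposal is correct and follows essentially the same route as the paper: you convolve \eqref{eq:V} with the resolvent $R_\lambda$ (phrased as the variation-of-constants solution of the linear Volterra equation), use the identity $K - R_\lambda * K = \tfrac{1}{\lambda}R_\lambda$ to reach the representation $V = V_0(1 - R_\lambda*\1) + \theta\,(R_\lambda*\1) + \tfrac{1}{\lambda}R_\lambda*(\sigma(V)dW)$, and then take $\Fcal_t$-conditional expectations at time $T$. Your treatment of the $\lambda=0$ case and of the two technical points (stochastic Fubini for the associativity of convolutions, and the true-martingale property of the stochastic integral via the affine form of $\sigma(x)^2$ and finiteness of first moments) matches what the paper does in its proof and the remark following it.
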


\begin{proof}
Suppose $\lambda\ne0$; otherwise, the proof is easier and does not use resolvents. Denote by $\bf{1}$ the function which takes the constant value $1$. The spot variance process $V$ is given by $V = V_0 + \lambda K * (\theta - V)  + K * (\sigma(X)dW)$. Therefore,
\[
V-R_\lambda*V = V_0 (1 - R_\lambda * \bm{1}) + \lambda (K - R_\lambda* K)  * (\theta  - V) + (K - R_\lambda * K) * (\sigma(V)dW).
\]
By definition of the resolvent, $K - R_\lambda* K = \tfrac{1}{\lambda} R_\lambda$. Plug this in and cancel the $-R_\lambda*V$ terms on both sides to get
\begin{equation}\label{eq:V_forward}
V = V_0 (1 - R_\lambda * \bm{1}) + \theta R_\lambda * \bm{1}  + \tfrac{1}{\lambda} R_\lambda * (\sigma(V)dW).
\end{equation}
The process $M_u := \int_0^u R_\lambda(T-s) \sigma(X_s) dW_s$, $u\in[0,T]$, is a martingale. Therefore, evaluating \eqref{eq:V_forward} at $T$ and taking $\Fcal_t$-conditional expectations yields
\[
\xi_t(T) = \E[V_T \mid \Fcal_t] = \xi_0(T) + \int_0^T \tfrac{1}{\lambda} R_\lambda(T-s) \sigma(V_s) dW_s,
\]
which is the claimed result.
\end{proof}

\begin{remark}
We ignored some technical but important points in the proof. First, the associativity property $(k_1*k_2)*dZ=k_1*(k_2*dZ)$ was used for certain kernels $k_1$, $k_2$ and $dZ=\sigma(X)dW$. This identity can be proved using the stochastic Fubini theorem. Second, we did not verify that $M$ is really a martingale, not just a local martingale. For $\sigma(x) = \alpha + a x$ this can be done by noting that
\[
\E[\langle M\rangle_T] = \tfrac{1}{\lambda^2}\int_0^T R_\lambda(T-s)^2\, \E[\sigma(X_s)^2] ds \le C(1+\sup_{s\le T}\E[|X_s|]),
\]
where one can take $C=\tfrac{1}{\lambda^2}(|\alpha|+|a|)\int_0^T R_\lambda(s)^2 ds$. The right-hand side is finite, so $M$ is actually a square-integrable martingale. Details are given by \cite{jaber2017affine}.
\end{remark}

\subsection{Fourier--Laplace transforms and Riccati--Volterra equations}

In the affine case \eqref{eq_b(x)a(x)}, not only conditional expectations have useful representations, but also Fourier--Laplace transforms. We now explain how such representations can be derived once the Volterra--Heston model (in log-price notation $L = \log S$) is written in forward variance form,
\begin{equation}\label{eq:Volterra_Heston}
\left\{ \
\begin{aligned}
dL_t &= -\tfrac{1}{2}V_t dt + \sqrt{V_t} dB_t\\
d\xi_t(T) &= \tfrac{1}{\lambda} R_\lambda(T-t) \sigma\sqrt{V_t} dW_t.\\
\end{aligned}
\right.
\end{equation}
Here $\lambda \ge 0$, $d\langle B,W\rangle_t = \rho dt$ for $\rho\in[0,1]$, and the diffusion part has the affine form \eqref{eq_b(x)a(x)} with $\alpha=0$ and $a=\sigma^2$ for some $\sigma>0$. Define the function 
\begin{equation}\label{eq:Q_def}
Q(u,z) = \frac{1}{2}(u^2 - u) + \sigma \rho u z + \frac{\sigma^2}{2}z^2, \quad u,z \in \C.
\end{equation}

\begin{theorem}\label{thm:Riccati_resolvent}
Consider the Volterra--Heston model \eqref{eq:Volterra_Heston}. Fix $T > 0$ and $(u,v,w) \in \C^3$, and assume that the Riccati--Volterra equation
\begin{equation}\label{eq:Riccati_kernel}
\psi = vK + K * \left(Q(u,\psi) - \lambda \psi + w\right)
\end{equation}
has a solution $\psi\in L^2(0,T)$. Then the auxiliary process
\begin{equation}\label{eq:M_def}
M_t = \exp\left( u L_t + v\xi_t(T) + w\int_0^T\xi_t(s)ds + \int_t^T \xi_t(s) Q(u,\psi(T-s)) ds \right)
\end{equation}
is a local martingale on $[0,T]$, and satisfies
\begin{equation}\label{eq_dMM_psi_NEW}
\frac{dM_t}{M_t} = u\sqrt{V_t}dB_t +  \psi(T-t) \sigma \sqrt{V_t} dW_t.
\end{equation}
If $M$ is a true martingale, the joint conditional Fourier--Laplace transform of the triplet $(L_T,V_T,\int_0^T V_sds)$ is $\E[\exp(u L_T+vV_T+w\int_0^TV_sds) \mid \Fcal_t]=M_t$.
\end{theorem}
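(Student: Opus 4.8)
The plan is to apply It\^o's formula to $M_t=\exp(Y_t)$, where $Y_t$ denotes the exponent in \eqref{eq:M_def}, to show that the finite-variation part of $dM_t/M_t$ vanishes, and then to read off the terminal value $M_T$. First I would compute $dY_t$. The log-price contributes $u\,dL_t = -\tfrac{u}{2}V_t\,dt + u\sqrt{V_t}\,dB_t$. For the forward-variance terms I would use that, by Proposition~\ref{P_fwdvar} applied with an arbitrary maturity $s$ in place of $T$, each curve point satisfies $d_t\xi_t(s) = \tfrac{1}{\lambda}R_\lambda(s-t)\sigma\sqrt{V_t}\,dW_t$ for $t\le s$, while $\xi_t(s)=V_s$ is frozen for $t\ge s$. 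Differentiating the two integral terms with the Leibniz rule for a moving lower limit, the boundary contributions produce $-\xi_t(t)\,dt = -V_t\,dt$ terms; writing $\int_0^T\xi_t(s)\,ds = \int_0^t V_s\,ds + \int_t^T\xi_t(s)\,ds$ shows that the boundary term of the $w$-integral cancels against the $+V_t\,dt$ coming from $\tfrac{d}{dt}\int_0^t V_s\,ds$. After this cancellation the only surviving drift in $dY_t$ is $\big(-\tfrac{u}{2} - Q(u,\psi(T-t))\big)V_t\,dt$, the second piece being the boundary term of $\int_t^T\xi_t(s)Q(u,\psi(T-s))\,ds$, and the martingale part is
\[
dY_t^{\mathrm{mart}} = u\sqrt{V_t}\,dB_t + \Psi(t)\,\sigma\sqrt{V_t}\,dW_t ,
\]
where, after the substitution $s'=s-t$ with $\tau=T-t$,
\[
\Psi(T-\tau) = \tfrac{v}{\lambda}R_\lambda(\tau) + \tfrac{w}{\lambda}\int_0^\tau R_\lambda(s')\,ds' + \tfrac{1}{\lambda}\int_0^\tau R_\lambda(s')\,Q(u,\psi(\tau-s'))\,ds' .
\]

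The key step is to identify $\Psi(t)$ with $\psi(T-t)$, and this is exactly where the Riccati--Volterra equation \eqref{eq:Riccati_kernel} enters. Rewriting \eqref{eq:Riccati_kernel} as $\psi + \lambda K*\psi = vK + K*(Q(u,\psi)+w)=:f$ and applying the standard resolvent inversion (the solution of $x + \lambda K*x = f$ is $x = f - R_\lambda*f$, which follows from the resolvent identity $\lambda K - R_\lambda = R_\lambda*(\lambda K)$), together with the relation $K - R_\lambda*K = \tfrac{1}{\lambda}R_\lambda$ already used in Proposition~\ref{P_fwdvar}, yields
\[
\psi = \tfrac{v}{\lambda}R_\lambda + \tfrac{1}{\lambda}R_\lambda*\big(Q(u,\psi)+w\big) .
\]
Comparing with the displayed formula for $\Psi(T-\tau)$ gives $\Psi(T-\tau)=\psi(\tau)$, i.e.\ $\Psi(t)=\psi(T-t)$, which establishes \eqref{eq_dMM_psi_NEW}.

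It then remains to check that the drift vanishes. By It\^o, $dM_t/M_t = dY_t + \tfrac12\,d\langle Y\rangle_t$, and using $d\langle B\rangle_t=d\langle W\rangle_t=dt$, $d\langle B,W\rangle_t=\rho\,dt$ with the coefficient $\psi(T-t)$ from \eqref{eq_dMM_psi_NEW},
\[
\tfrac12\,d\langle Y\rangle_t = \Big(\tfrac{u^2}{2} + \sigma\rho\,u\,\psi(T-t) + \tfrac{\sigma^2}{2}\psi(T-t)^2\Big)V_t\,dt = \Big(Q(u,\psi(T-t)) + \tfrac{u}{2}\Big)V_t\,dt,
\]
by the very definition \eqref{eq:Q_def} of $Q$. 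Adding the drift $\big(-\tfrac{u}{2}-Q(u,\psi(T-t))\big)V_t\,dt$ of $dY_t$ gives exactly $0$, so $M$ is a local martingale with the stated dynamics \eqref{eq_dMM_psi_NEW}. Finally, if $M$ is a true martingale then $M_t=\E[M_T\mid\Fcal_t]$; evaluating \eqref{eq:M_def} at $t=T$, the integral over $[T,T]$ vanishes, $\xi_T(T)=V_T$, and $\int_0^T\xi_T(s)\,ds=\int_0^T V_s\,ds$, so $M_T=\exp\!\big(uL_T+vV_T+w\int_0^TV_s\,ds\big)$, which is the claimed transform formula.

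The main obstacle is rigor in the manipulation of the two integral terms: differentiating $\int_0^T\xi_t(s)\,ds$ and $\int_t^T\xi_t(s)Q(u,\psi(T-s))\,ds$ under the integral sign, and commuting the $ds$-integral with the stochastic $dW$-integral, requires a stochastic Fubini argument together with sufficient regularity and integrability of the forward curve $s\mapsto\xi_t(s)$ and of $R_\lambda$ (the latter controlled via \eqref{eq:K_cond} and the resolvent estimates behind Proposition~\ref{P_fwdvar}). The algebraic cancellations above are otherwise routine once the forward-variance dynamics are in hand.
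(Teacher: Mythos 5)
Your proof is correct and follows essentially the same route as the paper's: the forward-variance dynamics of Proposition~\ref{P_fwdvar}, stochastic Fubini, the resolvent-transformed Riccati--Volterra equation $\psi = \tfrac{v}{\lambda}R_\lambda + \tfrac{1}{\lambda}R_\lambda * \left(Q(u,\psi)+w\right)$, and It\^o's formula to cancel the drift against the quadratic-variation term via the definition of $Q$. The only difference is that you carry out the computation for general $(u,v,w)$, whereas the paper writes out only the case $v=w=0$ (deriving $\psi=\tfrac{1}{\lambda}R_\lambda*Q(u,\psi)$ by the same resolvent algebra you use) and remarks that the general case is similar.
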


The two crucial assumptions are of course that \eqref{eq:Riccati_kernel} has a solution, and that the local martingale $M$ is really a true martingale. The following theorem gives a sufficient condition that guarantees this; the proof can be found in \cite{jaber2017affine}.

\begin{theorem}
Let $K$ be a kernel satisfying \eqref{eq:K_cond} and let $(u,v,w) \in \C^3$ satisfy ${\rm Re\,}u\in[0,1]$, ${\rm Re\,}v\le0$, and ${\rm Re\,}w\le0$. Then the Riccati--Volterra equation \eqref{eq:Riccati_kernel} has a unique global solution $\psi$, and the local martingale $M$ in \eqref{eq:M_def} is a true martingale.
\end{theorem}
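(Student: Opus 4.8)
The plan is to prove the two assertions in turn, using throughout that the sign constraints $\Re u\in[0,1]$, $\Re v\le0$, $\Re w\le0$ together with the positivity and regularity of $K$ from \eqref{eq:K_cond} are the structural ingredients that make the statement true.

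\emph{Global solution of the Riccati--Volterra equation.} I would first establish local existence and uniqueness by a contraction argument and then rule out finite-time blow-up by an a priori estimate. Write the right-hand side of \eqref{eq:Riccati_kernel} as the map $\Phi(\psi)=vK+K*(Q(u,\psi)-\lambda\psi+w)$. Since $K\in L^2_{\rm loc}(\R_+)$, Young's inequality gives $K*g\in L^2(0,\delta)$ whenever $g\in L^1(0,\delta)$, and $\psi\in L^2\Rightarrow\psi^2\in L^1$; together with the local Lipschitz bound $|Q(u,z_1)-Q(u,z_2)|\le C(1+|z_1|+|z_2|)\,|z_1-z_2|$ read off from the quadratic form \eqref{eq:Q_def}, this makes $\Phi$ a contraction on a ball of $L^2(0,\delta)$ as soon as $\|K\|_{L^2(0,\delta)}$ is small, i.e.\ for $\delta$ small. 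One thus obtains a unique solution on a maximal interval $[0,T^\ast)$, with the blow-up alternative that $\|\psi\|_{L^2(0,t)}\to\infty$ as $t\uparrow T^\ast$ whenever $T^\ast\le T$.

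The a priori bound excluding this is the main obstacle of the first part. Here the full strength of \eqref{eq:K_cond} enters: because $K$ is completely monotone it is in particular nonnegative (as is its resolvent), which yields a comparison principle for the scalar Volterra convolution equation. Exploiting it together with $\Re u\in[0,1]$ and $\Re v\le0$, one controls $\Re\psi$ from above and, through the explicit form of $Q$, bounds the growth of the quadratic term, so that the right-hand side of \eqref{eq:Riccati_kernel} cannot produce a finite-time singularity. Feeding this pointwise control back into the convolution estimates gives an $L^2(0,T)$ bound depending only on $T$, $(u,v,w)$ and $K$, contradicting the blow-up alternative; hence the solution is global, and by local uniqueness it is unique. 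This comparison step, where the completely monotone structure is genuinely used, is the technical heart, and the details are in \cite{jaber2017affine}.

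\emph{True martingale property.} By Theorem~\ref{thm:Riccati_resolvent} the process $M$ is already a local martingale with $dM_t/M_t$ given by \eqref{eq_dMM_psi_NEW}, so it suffices to produce a localizing sequence $\tau_n$ along which $\{M_{t\wedge\tau_n}\}_n$ is uniformly integrable for each fixed $t\le T$; then $M_{t\wedge\tau_n}\to M_t$ in $L^1$ identifies $M_t$ with $\E[M_T\mid\Fcal_t]$. To obtain uniform integrability I would bound $|M|$ via the real part of the exponent in \eqref{eq:M_def}. Forward variances are nonnegative, $\xi_t(s)\ge0$, so the contributions $\Re v\,\xi_t(T)$ and $\Re w\int_0^T\xi_t(s)ds$ are $\le0$ and drop out of any upper bound, while the term $\Re u\,L_t$ is governed by the price process, which \eqref{eq:S_SDE} makes a nonnegative supermartingale with $\Re u\in[0,1]$. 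The genuinely delicate contribution is $\int_t^T\xi_t(s)\,\Re Q(u,\psi(T-s))\,ds$, whose sign is not fixed by the hypotheses: here one combines the a priori bound on $\psi$ from the first part with moment estimates for the integrated Volterra square-root process (available from the existence theory behind Theorem~\ref{T_existence}) to control this quadratic term and thereby produce a uniform $L^p$-type bound or a uniformly integrable dominating family. Carrying out this joint control of the $L_t$-term and the quadratic term is the principal difficulty of the second part, and the complete argument is given in \cite{jaber2017affine}.
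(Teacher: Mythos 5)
First, a point of reference: the paper itself gives no proof of this theorem --- it states it and refers to \cite{jaber2017affine} --- so your proposal can only be measured against the argument in that reference. Your treatment of the first assertion (local existence by contraction in $L^2(0,\delta)$, a blow-up alternative, and an a priori bound in which positivity/complete monotonicity of $K$ and the sign restrictions on $(u,v,w)$ enter through a comparison principle) is a fair outline of that argument, although, like the paper, you defer its technical core to the reference.

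The second assertion is where you have a genuine gap. You correctly isolate $\int_t^T \xi_t(s)\,\Re Q(u,\psi(T-s))\,ds$ as the delicate term, but the mechanism you propose --- combining the a priori bound on $\psi$ with ``moment estimates for the integrated Volterra square-root process'' to produce a uniform $L^p$-type bound --- cannot work. The quantity to be controlled is the exponential of an unbounded linear functional of the variance path, and polynomial moments of $V$ or $\int_0^T V_s\,ds$ never control such exponentials: already in the classical Heston model, $\E[\exp(\epsilon \int_0^T V_s\,ds)]=\infty$ once $\epsilon$ exceeds a finite threshold. So if $\Re Q(u,\psi(\cdot))$ were genuinely allowed to be positive and large, $M$ could fail to be a true martingale, and no estimate would repair this. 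The hypotheses must therefore be used to pin down \emph{signs}, not sizes. The workable route, and in essence the one in \cite{jaber2017afffine}\hspace{-\fontdimen2\font}\cite{jaber2017affine}, is a pathwise domination of $|M|$: completing the square in the imaginary parts gives
\[
\Re Q(u,z) = Q(\Re u,\Re z) - \tfrac12\bigl( (\Im u + \sigma\rho\,\Im z)^2 + \sigma^2(1-\rho^2)(\Im z)^2 \bigr) \le Q(\Re u,\Re z),
\]
and this, combined with a comparison argument for the Volterra equation (this is where \eqref{eq:K_cond} enters a second time), reduces everything to the real-parameter problem $(\Re u,\Re v,\Re w)$, for which the Riccati structure forces the real part of the entire exponent in \eqref{eq:M_def} to be at most $\Re u\, L_t$. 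The upshot is the almost sure bound $|M_t|\le 1+S_t$ (and $|M_t|\le 1$ when $\Re u=0$), so that uniform integrability of $\{M_{t\wedge\tau_n}\}_n$ follows from the true martingale property of $S$ itself --- the case $u=1$, $v=w=0$, where $\psi\equiv 0$ solves \eqref{eq:Riccati_kernel} --- which is also established in the reference. In short: individual terms in the exponent may indeed have indefinite sign, as you note, but their combination does not, and proving that structural fact, rather than estimating moments, is the actual content of the martingale step that your proposal replaces with an unworkable bound.
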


We now present the proof of Theorem~\ref{thm:Riccati_resolvent} in the special case where $v=w=0$. This simplifies the calculations, and the proof in the general case is similar.

\begin{proof}[Proof of Theorem~\ref{thm:Riccati_resolvent} for $v=w=0$]
Subtract $R_\lambda * \psi$ from both sides of \eqref{eq:Riccati_kernel}, where now $v=w=0$, and apply the resolvent equation $K - R_\lambda* K = \tfrac{1}{\lambda}R_\lambda$ to get an equivalent form of the Riccati--Volterra equation,
\begin{equation}\label{eq:psi_resolvent}
\psi = \tfrac{1}{\lambda}R_\lambda * Q(u,\psi). 
\end{equation}
We aim to apply It\^o's formula to $M_t$, so we define $G_t = \int_t^T Q(u,\psi(T-s))\xi_t(s) ds$. Since $\xi_t(s)=V_s$ for $s\le t$, we can write
\[
G_t = \int_0^T Q(u,\psi(T-s))\xi_t(s)ds - \int_0^t Q(u,\psi(T-s))V_sds.
\]
Focus on the first term. Using first Proposition~\ref{P_fwdvar}, then the stochastic Fubini theorem \cite[Thm.~2.2]{veraar2012stochastic}, and finally \eqref{eq:psi_resolvent}, we get
\begin{align*}
\int_0^T Q&(u,\psi(T-s))\xi_t(s)ds \\
&= \int_0^T Q(u,\psi(T-s))\left\{\xi_0(s) + \sigma\int_0^{t\wedge s} \tfrac{1}{\lambda} R_\lambda(s-r)\sqrt{V_r} dW_r\right\}ds \\
&= \int_0^T Q(u,\psi(T-s)) \xi_0(s) ds +  \sigma \int_0^t \int_r^T  Q(u,\psi(T-s)) \tfrac{1}{\lambda}  R_\lambda(s-r) ds \sqrt{V_r} dW_r \\
&= \int_0^T Q(u,\psi(T-s)) \xi_0(s) ds +  \sigma \int_0^t \psi(T-r) \sqrt{V_r} dW_r.
\end{align*}
As a result,
\[
G_t = \int_0^T Q(u,\psi(T-s)) \xi_0(s) ds +  \sigma \int_0^t \psi(T-r) \sqrt{V_r} dW_r  - \int_0^t Q(u,\psi(T-s))V_sds.
\]
This leaves us in a position to apply It\^o's formula to $M_t = \exp \left(u L_t + G_t \right)$, giving
\begin{align*}
\frac{dM_t}{M_t} &= u\, dL_t + dG_t + \frac{u^2}{2} d\QVar{L}_t + u\, d\QCov{L}{G}_t + \frac{1}{2} d\QVar{G}_t \\
&= \left\{ \frac{1}{2}(u^2 - u) - Q(u,\psi(T-t)) + u \rho \sigma \psi(T-t) + \frac{\sigma^2}{2} \psi(T-t)^2 \right\} V_t\,dt \\
&\quad + u\sqrt{V_t}dB_t +  \psi(T-t) \sigma \sqrt{V_t} dW_t.
\end{align*}
Comparing with \eqref{eq:Q_def} shows that the $dt$-term vanishes, so $M$ is indeed the local martingale in \eqref{eq_dMM_psi_NEW}. If $M$ is a true martingale, we conclude that $\E[\exp(uL_T)\mid \Fcal_t] = \E[M_T \mid \Fcal_t] = M_t$, as claimed.
\end{proof}

\begin{remark}
Setting $g = Q(u,\psi)$ and applying $Q(u,\fdot)$ to both sides of \eqref{eq:psi_resolvent} yields
\[g = Q(u,\tfrac{1}{\lambda} R_\lambda * g).\]
This is the `convolution Riccati equation' considered by \cite{gatheral2018affine}, which leads to an equivalent formulation in terms of $g$ instead of $\psi$. 
\end{remark}

\subsection{Necessity of affine representations}

We now discuss a converse to Theorem~\ref{thm:Riccati_resolvent} obtained by \cite{gatheral2018affine}. Consider a general forward variance model of the type
\[d\xi_t(T) = \eta_t(T) dW_t,\]
where $\eta_t(T)$ is decreasing in $T$ and $\xi_t(T)$ is the forward variance associated to a price process $S$ of the form $dS_t = S_t a(V_t) dW_t$. Assume that the conditional cumulant generating function of the log-price $L_T = \log S_T$ is of the form 
\[
\E[\exp(u L_T) \mid \Fcal_t] = \exp \left( u L_t + \int_t^T \xi_t(T-s) g(s,u) ds \right)
\]
for all $u \in [0,1]$ and $0 \le t \le T$, for some continuous function $g \le 0$. Under mild integrability conditions on $\eta_t(T)$, \cite{gatheral2018affine} then show that, necessarily,
\[
a(V_t) = a \sqrt{V_t}\quad \text{and} \quad \eta_t(T) = \kappa(T-t) \sqrt{V_t}
\]
for some constant $a\ge0$ and kernel $\kappa$. Thus the model is precisely of the form \eqref{eq:Volterra_Heston}, and $\kappa$ is identified with the resolvent $\tfrac{1}{\lambda}R_\lambda$.

\subsection{Fractional calculus and the rough Heston model}

Consider the power law kernel $K_\text{$\alpha$-pow}(t) = t^{\alpha-1}/\Gamma(\alpha)$ used in the rough Heston model. The {\em Riemann--Liouville fractional integral} $I^\alpha$ is defined via convolution with this kernel, $I^\alpha f=K_\text{$\alpha$-pow}*f$. One then defines the Riemann--Liouville {\em fractional derivative} $D^\alpha$ as $D^\alpha f = \tfrac{d}{dt} I^{1 - \alpha} f$, which provides an inverse to the fractional integral in that $D^\alpha(I^\alpha f)=I^\alpha(D^\alpha f)=f$. It follows that, in the case $v=w=0$, \eqref{eq:Riccati_kernel} is equivalent to
\[D^\alpha \psi = Q(u,\psi) - \lambda \psi,\]
which is precisely the fractional Riccati equation derived by \cite{euch2016characteristic}. Using Proposition~\ref{P_fwdvar} and \eqref{eq:psi_resolvent} we can rewrite the exponent in \eqref{eq:M_def}, for $t = 0$, as
\begin{align*}
\xi_0 * Q(u,\psi) &= V_0\, \bm{1} * Q(u,\psi) + (\theta - V_0) \left(\bm{1} * R_\lambda * Q(u,\psi)\right) \\
&= V_0\, \bm{1} * Q(u,\psi) + \lambda (\theta - V_0) (\bm{1} * \psi) \\
&= V_0\, I^{1-\alpha} \psi + \lambda \theta \left(\bm{1} * \psi\right).
\end{align*}
Thus, in the rough Heston model, the unconditional transform formula in Theorem~\ref{thm:Riccati_resolvent}, with $v=w=0$, becomes
\[\E[\exp (u L_T)] = \exp \left( u L_0 + \lambda \theta \int_0^T \psi(s)ds + V_0\, I^{1-\alpha} \psi(T)\right),\]
which is consistent with \cite{euch2016characteristic}.\\

\section{Modified forward process representation}\label{sec:forwardrep}

Another perspective on \eqref{eq_SCE} via a {\em stochastic partial differential equation} arises as follows.
Starting with a Volterra process $V$ of the form~\eqref{eq_SCE}, define the process
\begin{equation*}
u_t(x)=\E\left[V_{t+x}-\int_t^{t+x}K(t-s+x)b(V_s)\,ds\Mid\mathcal F_t\right].
\end{equation*}
This process is considered by \cite{jaberelouch2018markovaffineroughHeston}. We call it the {\it modified forward process}, because had we not subtracted the time integral, we would have obtained the so-called Musiela parameterization $\xi_t(t+x)$ of the forward process. The only term inside the conditional expectation that is not already $\Fcal_t$-measurable is an integral with respect to $W$. This gives
\begin{equation}\label{eq_utx}
u_t(x)=V_0+\int_0^t K(t-s+x)b(V_s)\,ds+\int_0^t K(t-s+x)\sigma(V_s)\,dW_s,
\end{equation}
which can be expressed in terms of the following SPDE.

\begin{proposition}\label{prop_SPDE_SCE}
The process $u_t(x)$ in \eqref{eq_utx} is a mild solution of the SPDE
\begin{equation}\label{eq_utx2}
du_t(x)=(\partial_x u_t(x)+K(x)b(u_t(0)))dt+K(x)\sigma(u_t(0))dW_t
\end{equation}
with initial condition $u_0(x)=V_0$ for all $x$.
\end{proposition}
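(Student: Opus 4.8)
The plan is to recognize \eqref{eq_utx2} as a semilinear stochastic evolution equation driven by the transport operator $A=\partial_x$, and then to verify directly that the candidate process \eqref{eq_utx} satisfies the variation-of-constants (Duhamel) identity that \emph{defines} a mild solution. The heart of the matter is that the semigroup generated by $\partial_x$ acts by translation, so the Duhamel formula reproduces \eqref{eq_utx} almost verbatim.

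First I would identify the $C_0$-semigroup generated by $A=\partial_x$ on a suitable space of functions on $\R_+$: this is the left-translation semigroup $(S(h)f)(x)=f(x+h)$, $h\ge0$, whose generator is indeed $\tfrac{d}{dh}f(x+h)\big|_{h=0}=\partial_x f$. Writing the coefficients of \eqref{eq_utx2} as the function-valued maps $F(\phi)=K(\cdot)\,b(\phi(0))$ and $G(\phi)=K(\cdot)\,\sigma(\phi(0))$, a mild solution is by definition a process satisfying
\begin{equation*}
u_t = S(t)u_0 + \int_0^t S(t-s)F(u_s)\,ds + \int_0^t S(t-s)G(u_s)\,dW_s.
\end{equation*}
Evaluating pointwise, the first term is $(S(t)u_0)(x)=u_0(x+t)=V_0$ by the initial condition $u_0\equiv V_0$, while $(S(t-s)F(u_s))(x)=K(x+t-s)\,b(u_s(0))$ and similarly for $G$. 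Setting $x=0$ in \eqref{eq_utx} shows that $u_t(0)$ coincides with the right-hand side of \eqref{eq_SCE}, so that $u_t(0)=V_t$; this self-consistency closes the system and gives $b(u_s(0))=b(V_s)$, $\sigma(u_s(0))=\sigma(V_s)$. Substituting, the Duhamel formula becomes exactly \eqref{eq_utx}, which proves the claim.

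The main obstacle is not the algebra, which is a direct substitution, but the functional-analytic setup that renders each object well-defined. One must choose a state space in which the translation semigroup is strongly continuous, in which the pointwise evaluation $\phi\mapsto\phi(0)$ is meaningful — this is essential, since the coefficients depend on $u_t(0)$ — and in which the kernel $K$ yields well-defined drift and stochastic-convolution terms. The delicate point is the possible singularity of $K$ at the origin: for the power-law kernel $K(x)=x^{\alpha-1}/\Gamma(\alpha)$ with $\alpha<1$ the coefficients $F(u_s),G(u_s)$ blow up as $x\downarrow0$, so one works in a weighted or locally integrable space and exploits that each translate $K(\cdot+t-s)$ with $t-s>0$ is regular. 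Finally, the stochastic convolution $\int_0^t S(t-s)G(u_s)\,dW_s$ must be shown to be well-defined and to admit the pointwise representation used above; this follows from $K\in L^2_{\mathrm{loc}}(\R_+)$ together with the stochastic Fubini theorem already invoked elsewhere in the text.
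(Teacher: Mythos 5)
Your proof is correct and takes essentially the same route as the paper's: both identify $\partial_x$ as the generator of the translation semigroup, so that $K(t-s+x)$ is the semigroup applied to $K$, and then recognize \eqref{eq_utx}, together with the self-consistency $u_t(0)=V_t$, as precisely the Duhamel (mild) formulation of \eqref{eq_utx2}. The only difference is presentational — you write out the variation-of-constants identity and the functional-analytic caveats explicitly, while the paper states the same identification tersely and defers details to Da Prato--Zabczyk.
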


\begin{proof}
Formally taking the differential in \eqref{eq_utx}, using that $\partial_t K(t-s+x)=\partial_x K(t-s+x)$ and that $u_t(0)=V_t$, gives \eqref{eq_utx2}. More rigorously, note that $K(t-s+x)=T_{t-s}K(x)$, where $T_{t-s}$ is the shift operator that maps any function $f$ to the shifted function $f(t-s+\fdot)$. The derivative $\partial_x$ is the infinitesimal generator of the shift semigroup $\{T_t\}_{t\ge0}$, so, by definition, \eqref{eq_utx} is actually the mild formulation of the SPDE \eqref{eq_utx2}; see \cite[Section~6.1]{da2014stochastic}.
\end{proof}

\subsection{Fourier--Laplace transforms and Riccati equations}

The SPDE \eqref{eq_utx2} suggests that the process $\{u_t(\fdot)\}_{t\ge 0}$ is an infinite dimensional Markov process. In the affine case \eqref{eq_b(x)a(x)}, we therefore expect a Fourier--Laplace transform formula like
\begin{equation}\label{E:chfunctliftcurve}
\E\left[\exp\left(\int_0^\infty h(x) u_T(x)dx\right)\Mid\mathcal{F}_t\right]=\exp\left(\phi(T-t)+\int_0^\infty\Psi(T-t,x)u_t(x)dx\right),
\end{equation}
where $\phi(\tau)$ and $\Psi(\tau,x)$ are solutions of appropriate Riccati equations. These equations turn out to be
\begin{align}
\partial_t\phi(t)&= \mathcal R_{\phi}\left(\textstyle\int_0^\infty \Psi(t,y)K(y)dy\right) \label{E:Riccatilift1_forward_mild}\\
\Psi(t,x)&=h(x-t)\1_{\{x\geq t\}}+\mathcal R_{\Psi}\left(\textstyle\int_0^\infty \Psi(t-x,y)K(y)dy\right)\1_{\{x<t\}\label{E:Riccatilift2_forward_mild}}
\end{align}
with $\phi(0)=0$ and where we define
\begin{equation}
\label{eq:RphiRpsi}
\Rcal_{\phi}(y)=\beta y+\frac{\alpha}{2}y^2,\qquad \Rcal_{\Psi}(y)=-\lambda y+\frac{a}{2}y^2.
\end{equation}

\begin{remark}
At first sight, \eqref{E:Riccatilift2_forward_mild} does not look like a differential equation for $\Psi(t,x)$. But, along the lines of the proof of Proposition~\ref{prop_SPDE_SCE}, \eqref{E:Riccatilift2_forward_mild} can actually be viewed as a mild formulation of the formal PDE
\[
\partial_t\Psi(t,x) = -\partial_x\Psi(t,x)+\mathcal R_{\Psi}\left(\int_0^{\infty}\Psi(t,y) K(y)\,dx\right)\delta_0(x)
\]
with initial condition $\Psi(0,x)=h(x)$.
\end{remark}

Let us give a derivation of the Riccati equations \eqref{E:Riccatilift1_forward_mild}--\eqref{E:Riccatilift2_forward_mild}. We assume that $V_0=0$; this does not affect the validity of the Riccati equations, but simplifies the calculations. Suppose that $\Psi(t,x)$ satisfies~\eqref{E:Riccatilift2_forward_mild} and define $dZ_t=b(V_t)dt+\sigma(V_t)dW_t$, a semimartingale. Using \eqref{eq_utx}, \eqref{E:Riccatilift2_forward_mild}, and the stochastic Fubini theorem; then a change of variables; and finally \eqref{E:Riccatilift2_forward_mild} once again, we get
\begin{align*}
\int_0^{\infty}&\Psi(T-t,x)u_t(x)dx\\
&=\int_0^t \int_{T-t}^{\infty} h(x-T+t)K(t-s+x)\,dx\,dZ_s\\
&\quad+\int_0^t \int_{0}^{T-t} \mathcal R_{\Psi}\left(\textstyle\int_0^\infty \Psi(T-t-x,z)K(z)dz\right)K(t-s+x)\,dx\,dZ_s\\
&=\int_0^t \int_{T-s}^{\infty} h(y-T+s)K(y)\,dy\,dZ_s\\
&\quad+\int_0^t \int_{t-s}^{T-s} \mathcal R_{\Psi}\left(\textstyle\int_0^\infty \Psi(T-s-y,z)K(z)dz\right)K(y)\,dy\,dZ_s\\
&=\int_0^t \int_0^\infty \Psi(T-s,x)K(x)dx\,dZ_s \\
&\quad-\int_0^t\int_0^{t-s} \mathcal R_{\Psi}\left(\textstyle\int_0^\infty \Psi(T-s-y,z)K(z)dz\right)K(y)\,dy\,dZ_s.
\end{align*}
Combining this with the stochastic Volterra equation~\eqref{eq_SCE} satisfied by $V$ yields
\begin{align*}
d\int_0^{\infty}&\Psi(T-t,x)u_t(x)dx\\
&=\int_0^\infty \Psi(T-t,x)K(x)dx\,dZ_t - \int_0^t R_{\Psi}\left(\textstyle\int_0^\infty \Psi(T-t,y)K(y)dy\right)K(t-s)\,dZ_s\,dt\\
&=\int_0^\infty \Psi(T-t,x)K(x)dx\,dZ_t -R_{\Psi}\left(\textstyle\int_0^\infty \Psi(T-t,y)K(y)dy\right)V_t\,dt.
\end{align*}
Let $M_t$ denote the right-hand side of~\eqref{E:chfunctliftcurve}. Use the previous equation and \eqref{E:Riccatilift1_forward_mild} to get
\begin{equation}\label{eq_dMM_sec5}
\frac{dM_t}{M_t}=\int_0^\infty \Psi(T-t,x)K(x)dx\,\sigma(V_t)\,dW_t.
\end{equation}
Thus $M$ is a local martingale, and $M_T=\exp(\int_0^\infty h(x)u_T(x)dx)$ since $\Psi(0,x)=h(x)$. If $M$ is a true martingale we deduce the exponential-affine formula~\eqref{E:chfunctliftcurve}.

This can be used to derive the special case of Theorem~\ref{thm:Riccati_resolvent} where $u=w=0$ (note that $\alpha=0$ and $a=\sigma^2$ in that theorem). Formally setting $h=v\delta_0$ with $v\in\mathbb C$ gives
\[
\E\left[\exp\left( v V_T \right)\mid\mathcal{F}_t\right]=\exp\left(\phi(T-t)+\int_0^{\infty}\Psi(T-t,x)u_t(x)\,dx\right).
\]
There is a connection between the Riccati equation \eqref{E:Riccatilift2_forward_mild} and the Riccati--Volterra equation \eqref{eq:Riccati_kernel} in Theorem~\ref{thm:Riccati_resolvent}. To wit, suppose that $\Psi(t,x)$ solves \eqref{E:Riccatilift2_forward_mild} and define
\begin{equation}\label{eq_psi_eq_sec5}
\psi(t)=\int_0^{\infty} \Psi(t,x)K(x)dx.
\end{equation}
Using the definition \eqref{eq:RphiRpsi} of $\Rcal_\Psi$, the definition \eqref{eq_psi_eq_sec5} of $\psi$, \eqref{E:Riccatilift2_forward_mild}, and a change of variables, we get
\begin{align*}
K\ast\left(-\lambda \psi+\frac{a}{2}\psi^2\right)(t)
&=\int_0^\infty K(x)\mathcal R_{\Psi}(\psi(t-x))\bm1_{\{x<t\}}dx\\
&=\int_0^\infty K(x)(\Psi(t,x)-h(x-t)\1_{\{x\geq t\}})dx\\
&=\psi(t)-\int_0^{\infty}h(x)K(t+x)dx.
\end{align*}
If $h=v\delta_0$, we deduce the Riccati--Volterra equation \eqref{eq:Riccati_kernel} for the case $u=w=0$. Observe also that, in view of \eqref{eq_psi_eq_sec5}, \eqref{eq_dMM_sec5} agrees with \eqref{eq_dMM_psi_NEW}.

\section{Laplace representation}\label{sec_Laplace}

Our final perspective on \eqref{eq_SCE} is as a {\em mixture of mean-reverting processes}. Mathematically, this is analogous to the SPDE representation in Section~\ref{sec:forwardrep}. In fact, the SPDE representation and the developments here can be regarded as two instances of a single abstract {\em infinite-dimensional lift}. This unifying perspective is developed by \cite{cuchieroteichmann2018markovlifts}, but goes well beyond the scope of this chapter. Still, to emphasize the analogies {\it we will use the notation $u_t(x)$ and $\Psi(t,x)$ also in this section, though with different meanings than in Section~\ref{sec:forwardrep}}. The reader will notice strong similarities to the derivations in Section~\ref{sec:forwardrep}.

Assume that the kernel $K$ is the Laplace transform of some measure $\mu$, that is,
\begin{equation}\label{eq_Kmu}
K(t) = \int_0^\infty e^{-xt} \mu(dx), \quad t>0.
\end{equation}
If $\mu$ is a positive measure, then $K$ is completely monotone on $(0,\infty)$. Conversely, any such $K$ is of the form \eqref{eq_Kmu}, a result known as the Bernstein--Widder theorem. This clearly jibes well with Theorem~\ref{T_existence}\ref{T_existence:2}. On the other hand, $\mu$ could also be a signed measure, as long as $K$ remains in $L^2_{\rm loc}(\R_+)$. This gives a large class of kernels, not necessarily completely monotone, that are compatible with Theorem~\ref{T_existence}\ref{T_existence:1}.

\begin{example}
If we are in the classical case $K(t)=1$, then $\mu=\delta_0$. In the rough Heston case $K(t)=t^{\alpha-1}/\Gamma(\alpha)$ with $\alpha\in(\frac12,1)$, then $\mu(dx)=\frac{x^{-\alpha}}{\Gamma(\alpha)\Gamma(1-\alpha)}\,dx$.
\end{example}

To see how \eqref{eq_Kmu} leads to a (possibly infinite) mixture of mean-reverting processes, and in order to simplify the presentation, we will assume that $V_0=0$. The general case can be deduced by considering the process $\widetilde{V}=V-V_0$; the reader is invited to work out what happens in this general case. 

Substituting \eqref{eq_Kmu} into \eqref{eq_SCE} with $V_0=0$, and interchanging the time-\ and $\mu$-integrals (justified by the stochastic Fubini theorem) yields the representation
\begin{equation}\label{eq_Xux}
V_t = \int_0^\infty u_t(x) \mu(dx),
\end{equation}
where we define, for all $t\ge0$,
\begin{equation*}
u_t(x) = \int_0^t e^{-x(t-s)} b(V_s) ds + \int_0^t e^{-x(t-s)}\sigma(V_s)dW_s.
\end{equation*}
Crucially, each process $\{u_t(x)\}_{t\ge0}$ is a semimartingale, even if $V$ is not. To find its dynamics move $e^{-xt}$ outside the time integrals and apply the product rule to get
\[
du_t(x) = (-xu_t(x)+b(V_t))dt + \sigma(V_t)dW_t.
\]
Plugging \eqref{eq_b(x)a(x)} and \eqref{eq_Xux} into this expression gives
\begin{equation}\label{eq_dux}
du_t(x) = \left(\beta-xu_t(x)-\lambda \int_0^\infty u_t(y)\mu(dy) \right)dt + \sqrt{\alpha + a \int_0^\infty u_t(y)\mu(dy)} dW_t.
\end{equation}
As $x$ ranges through the support of $\mu$, \eqref{eq_dux} defines a (possibly infinite) coupled system of mean-reverting processes, and \eqref{eq_Xux} expresses $V$ as a mixture of those processes. The Gaussian case $a=0$ is covered by results of \cite{car_cou_mon_00,HARMS2018}.

Apart from its theoretical interest, this representation can be useful for numerical purposes. The idea is to replace $\mu$ by an approximation $\mu_n$ that is supported on finitely many points $x_1,\ldots,x_n$. The system \eqref{eq_dux} then becomes an SDE for the $n$-dimensional Markov process $\{u_t(x_1),\ldots,u_t(x_n)\}_{t\ge0}$. This can be used to approximate the affine Volterra process $V$. More details on this construction are given by \cite{jaberelouch2018approxmarkov} and \cite{cuchieroteichmann2018markovlifts}.

\subsection{Fourier--Laplace transforms and Riccati equations}

The drift and squared volatility in~\eqref{eq_dux} depend on the curve $u_t(\fdot)$ in an affine way. This suggests that the process $\{u_t(\fdot)\}_{t\ge0}$ is an affine Markov process, possibly infinite-dimensional. In particular, we expect a transform formula similar to \eqref{E:chfunctliftcurve}:
\begin{equation}\label{eq_LTux}
\E\left[\exp\left(\int_0^{\infty}h(x)u_T(x)\mu(dx)\right)\Mid\mathcal{F}_t\right]=\exp\left(\phi(T-t)+\int_0^{\infty}\Psi(T-t,x)u_t(x)\mu(dx)\right),
\end{equation}
where $\phi(\tau)$ and $\Psi(\tau,x)$ are solutions of appropriate Riccati equations with initial conditions
\begin{equation}\label{eq:RphiRpsi_bdry}
\phi(0) = 0, \quad \Psi(0,x) = h(x).
\end{equation}
In this Markovian situation, one can apply the standard method for deriving the Riccati equations.
Let $M_t$ denote the right-hand side of \eqref{eq_LTux}. It\^o's formula and \eqref{eq_dux} give, after some computations,
\begin{equation}\label{eq_dM/M_Laplace}
\begin{aligned}
\frac{dM_t}{M_t} &= \Big[ -\partial_t\phi(T-t) + \Rcal_{\phi}\big(\textstyle{\int}_0^{\infty}\Psi(T-t,y) \mu(dy)\big) \\
&\qquad + \displaystyle{\int}_0^\infty \Big( - \partial_t\Psi(T-t,x) - x\Psi(T-t,x) \\
&\qquad\qquad\qquad + \Rcal_{\Psi}\big(\textstyle{\int}_0^{\infty}\Psi(T-t,y) \mu(dy)\big)\Big) u_t(x) \mu(dx) \Big] dt \\
&\quad + \text{local martingale},
\end{aligned}
\end{equation}
with $\mathcal R_{\phi}$, $\mathcal R_{\Psi}$ as in~\eqref{eq:RphiRpsi}. It is remarkable that the same functions $\mathcal R_{\phi}$ and $\mathcal R_{\Psi}$ as for the SPDE representation occur also here. This is one manifestation of the underlying abstract point of view due to \cite{cuchieroteichmann2018markovlifts}.

Suppose that $\phi$ and $\Psi$ solve the possibly infinite-dimensional Riccati equations
\begin{align}
\partial_t\phi(t)&= \Rcal_{\phi}\left(\textstyle\int_0^{\infty}\Psi(t,y)\mu(dy)\right),\nonumber\\
\partial_t\Psi(t,x)&=-x\Psi(t,x)+\mathcal R_{\Psi}\left(\textstyle\int_0^{\infty}\Psi(t,y)\mu(dy)\right), \label{E:Riccatilift2_measure}
\end{align}
with initial conditions \eqref{eq:RphiRpsi_bdry}. Then, due to \eqref{eq_dM/M_Laplace}, $M$ is a local martingale with $M_T=\exp(\int_0^{\infty}h(x)u_T(x)\mu(dx))$. If $M$ is actually a true martingale, we obtain the transform formula \eqref{eq_LTux}, which is nothing but the martingale property $\E[M_T\mid\Fcal_t]=M_t$. In particular, if $h(x)\equiv v$ is constant, combining \eqref{eq_Xux} and \eqref{eq_LTux} gives
\[
\E\left[\exp\left( v V_T \right)\mid\mathcal{F}_t\right]=\exp\left(\phi(T-t)+\int_0^{\infty}\Psi(T-t,x)u_t(x)\mu(dx)\right).
\]

Just as in Section~\ref{sec:forwardrep}, there is a connection between the solution $\Psi(t,x)$ to the Riccati equation \eqref{E:Riccatilift2_measure}, with $h(x)\equiv v$ constant, and the solution $\psi(t)$ to the Riccati--Volterra equation \eqref{eq:Riccati_kernel}, with $u=w=0$. The link is given by the formula
\[
\psi(t)=\int_0^{\infty} \Psi(t,x)\mu(dx),
\]
which can be verified by similar calculations are in Section~\ref{sec:forwardrep}. This gives yet another way to derive the Fourier--Laplace transform formula.

\bibliographystyle{plainnat}
\bibliography{references}

\begin{thebibliography}{20}
\providecommand{\natexlab}[1]{#1}
\providecommand{\url}[1]{\texttt{#1}}
\expandafter\ifx\csname urlstyle\endcsname\relax
  \providecommand{\doi}[1]{doi: #1}\else
  \providecommand{\doi}{doi: \begingroup \urlstyle{rm}\Url}\fi

\bibitem[Abi~Jaber and
  El~Euch(2018{\natexlab{a}})]{jaberelouch2018approxmarkov}
Eduardo Abi~Jaber and Omar El~Euch.
\newblock Multi-factor approximation of rough volatility models.
\newblock \emph{arXiv:1801.10359}, 2018{\natexlab{a}}.

\bibitem[Abi~Jaber and
  El~Euch(2018{\natexlab{b}})]{jaberelouch2018markovaffineroughHeston}
Eduardo Abi~Jaber and Omar El~Euch.
\newblock Markovian structure of the {V}olterra {H}eston model.
\newblock \emph{arXiv:1803.00477}, 2018{\natexlab{b}}.

\bibitem[Abi~Jaber et~al.(2017)Abi~Jaber, Larsson, and Pulido]{jaber2017affine}
Eduardo Abi~Jaber, Martin Larsson, and Sergio Pulido.
\newblock Affine {V}olterra processes.
\newblock \emph{arXiv:1708.08796}, 2017.

\bibitem[Bayer et~al.(2016)Bayer, Friz, and Gatheral]{bay_fri_gat_16}
Christian Bayer, Peter Friz, and Jim Gatheral.
\newblock Pricing under rough volatility.
\newblock \emph{Quantitative Finance}, 16\penalty0 (6):\penalty0 887--904,
  2016.

\bibitem[Bergomi and Guyon(2012)]{ber_guy_12}
L.~Bergomi and J.~Guyon.
\newblock Stochastic volatility's orderly smiles.
\newblock \emph{Risk}, 25\penalty0 (5):\penalty0 60--66, 2012.

\bibitem[B{\"u}hler(2006)]{buehler2006volatility}
Hans B{\"u}hler.
\newblock \emph{Volatility Markets -- Consistent modeling, hedging and
  practical implementation}.
\newblock PhD thesis, TU Berlin, 2006.

\bibitem[Carmona et~al.(2000)Carmona, Coutin, and Montseny]{car_cou_mon_00}
Philippe Carmona, Laure Coutin, and G.~Montseny.
\newblock Approximation of some {G}aussian processes.
\newblock \emph{Stat. Inference Stoch. Process.}, 3\penalty0 (1-2):\penalty0
  161--171, 2000.
\newblock 19th ``Rencontres Franco-Belges de Statisticiens'' (Marseille, 1998).

\bibitem[Comte et~al.(2012)Comte, Coutin, and Renault]{Comte2012}
F.~Comte, L.~Coutin, and E.~Renault.
\newblock Affine fractional stochastic volatility models.
\newblock \emph{Annals of Finance}, 8\penalty0 (2):\penalty0 337--378, May
  2012.

\bibitem[Cuchiero and Teichmann(2018)]{cuchieroteichmann2018markovlifts}
Christa Cuchiero and Josef Teichmann.
\newblock Generalized {F}eller processes and {M}arkovian lifts of stochastic
  {V}olterra processes: the affine case.
\newblock \emph{arXiv:1804.10450}, 2018.

\bibitem[Da~Prato and Zabczyk(2014)]{da2014stochastic}
Giuseppe Da~Prato and Jerzy Zabczyk.
\newblock \emph{Stochastic equations in infinite dimensions}.
\newblock Cambridge university press, 2014.

\bibitem[Duffie et~al.(2003)Duffie, Filipovi\'c, and
  Schachermayer]{duf_fil_sch_03}
D.~Duffie, D.~Filipovi\'c, and W.~Schachermayer.
\newblock Affine processes and applications in finance.
\newblock \emph{Ann. Appl. Probab.}, 13\penalty0 (3):\penalty0 984--1053, 2003.

\bibitem[El~Euch and Rosenbaum(2016)]{euch2016characteristic}
Omar El~Euch and Mathieu Rosenbaum.
\newblock The characteristic function of rough {H}eston models.
\newblock \emph{Mathematical Finance}, 2016.

\bibitem[Fukasawa(2017)]{fukasawa2017short}
Masaaki Fukasawa.
\newblock Short-time at-the-money skew and rough fractional volatility.
\newblock \emph{Quantitative Finance}, 17\penalty0 (2):\penalty0 189--198,
  2017.

\bibitem[Gatheral and Keller-Ressel(2018)]{gatheral2018affine}
Jim Gatheral and Martin Keller-Ressel.
\newblock Affine forward variance models.
\newblock \emph{arXiv:1801.06416}, 2018.

\bibitem[Gatheral et~al.(2018)Gatheral, Jaisson, and
  Rosenbaum]{gatheral2018volatility}
Jim Gatheral, Thibault Jaisson, and Mathieu Rosenbaum.
\newblock Volatility is rough.
\newblock \emph{Quantitative Finance}, 18\penalty0 (6):\penalty0 933--949,
  2018.

\bibitem[Guennoun et~al.(2018)Guennoun, Jacquier, Roome, and Shi]{GJR:14}
Hamza Guennoun, Antoine Jacquier, Patrick Roome, and Fangwei Shi.
\newblock Asymptotic behavior of the fractional {H}eston model.
\newblock \emph{SIAM Journal on Financial Mathematics}, 9\penalty0
  (3):\penalty0 1017--1045, 2018.

\bibitem[Harms and Stefanovits(2018)]{HARMS2018}
Philipp Harms and David Stefanovits.
\newblock Affine representations of fractional processes with applications in
  mathematical finance.
\newblock \emph{Stochastic Processes and their Applications}, 2018.
\newblock In press. doi:10.1016/j.spa.2018.04.010.

\bibitem[Heston(1993)]{heston1993closed}
Steven~L Heston.
\newblock A closed-form solution for options with stochastic volatility with
  applications to bond and currency options.
\newblock \emph{The Review of Financial Studies}, 6\penalty0 (2):\penalty0
  327--343, 1993.

\bibitem[Kallsen(2006)]{Kallsen:2006}
J.~Kallsen.
\newblock A didactic note on affine stochastic volatility models.
\newblock In \emph{From stochastic calculus to mathematical finance}, pages
  343--368. Springer, 2006.

\bibitem[Veraar(2012)]{veraar2012stochastic}
Mark Veraar.
\newblock The stochastic {F}ubini theorem revisited.
\newblock \emph{Stochastics An International Journal of Probability and
  Stochastic Processes}, 84\penalty0 (4):\penalty0 543--551, 2012.

\end{thebibliography}
\end{document}